\documentclass{article}

\usepackage[english]{babel}

\usepackage[letterpaper,top=2cm,bottom=2cm,left=3cm,right=3cm,marginparwidth=1.75cm]{geometry}

\usepackage{amsthm,amssymb,amsmath}  
\usepackage{mathtools}
\usepackage{graphicx}
\usepackage[colorlinks=true, allcolors=blue]{hyperref}
\usepackage[capitalize]{cleveref}
\usepackage{xspace}
\usepackage{thmtools}
\usepackage{thm-restate}
\usepackage{url}
\usepackage{authblk}

\def\dd{\mathinner{.\,.}} 
\newcommand{\cO}{\mathcal{O}}
\newcommand{\Occ}{\textsf{Occ}}
\newcommand{\PrefSuf}{\textsf{PrefSuf}}

\newtheorem{theorem}{Theorem}
\newtheorem{fact}{Fact}
\newtheorem{claim}{Claim}
\newtheorem{lemma}{Lemma}
\newtheorem{remark}{Remark}
\newtheorem{corollary}{Corollary}

\title{Optimal prefix-suffix queries with applications}
\author[1,2]{Solon P.\ Pissis}
\affil[1]{CWI, Amsterdam, The Netherlands}
\affil[2]{Vrije Universiteit, Amsterdam, The Netherlands}

\begin{document}

\maketitle

\begin{abstract}
We revisit the classic \emph{border tree} data structure [Gu, Farach, Beigel, SODA 1994] that answers the following \emph{prefix-suffix} queries on a string $T$ of length $n$ over an integer alphabet $\Sigma=[0,\sigma)$: for any $i,j \in [0,n)$ return all occurrences of $T$ in $T[0\dd i]T[j\dd n-1]$. The border tree of $T$ can be constructed in $\cO(n)$ time and answers prefix-suffix queries in $\cO(\log n + \Occ)$ time, where $\Occ$ is the number of occurrences of $T$ in $T[0\dd i]T[j\dd n-1]$. Our contribution here is the following. We present a completely different and remarkably simple data structure that can be constructed in the optimal $\cO(n/\log_\sigma n)$ time and supports queries in the optimal $\cO(1)$ time. Our result is based on a new structural lemma that lets us encode the output of any query in \emph{constant time and space}. We also show a new direct application of our result in pattern matching on node-labeled graphs.
\end{abstract}


\section{Introduction}
Let $T=T[0\dd n-1]=T[0\dd n)$ be a string of length $n$ over an integer alphabet $\Sigma=[0,\sigma)$ with $\sigma=n^{\cO(1)}$.
We would like to preprocess $T$ in order to answer the following type of queries: for any $i,j \in [0,n)$, return all occurrences of $T$ in $T[0\dd i]T[j\dd n-1]$. This type of query, which we denote here by $\PrefSuf(i,j)$, is supported by the classic \emph{border tree} data structure in $\cO(\log n + \Occ)$ time after an $\cO(n)$-time preprocessing~\cite{DBLP:conf/soda/GuFB94}. We show the following optimal data structure for $\PrefSuf(i,j)$ queries.

\begin{theorem}\label{the:main}
    For any string $T$ of length $n$ over an alphabet $\Sigma=[0,\sigma)$ with $\sigma=n^{\cO(1)}$,
    we can answer $\PrefSuf(i,j)$ queries, for any $i,j\in[0,n)$, in $\cO(1)$ time after an $\cO(n/\log_\sigma n)$-time preprocessing. The data structure size is $\cO(n/\log_\sigma n)$ and the output is given as a compact representation of $\cO(1)$ size.
\end{theorem}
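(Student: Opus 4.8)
The plan is to reduce everything to one structural lemma plus a handful of constant‑time primitives. Write $S=T[0\dd i]\,T[j\dd n-1]$; since $|S|=(i+1)+(n-j)$, every occurrence of $T$ in $S$ starts at a position $p\in[0,\,i+1-j]$, so all occurrences lie within a window of span at most $n=|T|$. First I would establish the \emph{structural lemma}: the starting positions of the occurrences of $T$ in $S$ form an arithmetic progression (possibly empty, possibly a single point). I would prove this by the standard Fine--Wilf argument --- two occurrences at $p<p'$ with $p'-p<n$ overlap, so $p'-p$ is a period of $T$; three occurrences $p<p'<p''$ with $p''-p\le n$ give periods $p'-p$ and $p''-p'$ of sum $\le n$, hence $\gcd(p'-p,p''-p')$ is a period of $T$ by Fine--Wilf, and inspecting the period‑$\gcd$ stretch of $S$ covered by these occurrences forces them to be equally spaced with common difference $\mathrm{per}(T)$. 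The single boundary case $i=n-1\wedge j=0$ (where $S=T^2$ and the two end occurrences may merely abut) I would dispatch directly from the fact that the occurrences of $T$ in $T^2$ are governed by the primitive root of $T$. This lemma is exactly what encodes the output in $\cO(1)$ space, namely as the triple (first occurrence, common difference, count).

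Given the lemma, it suffices to compute in $\cO(1)$ time the first occurrence $f$, the last occurrence $L$, and the common difference $d$; the count is then $(L-f)/d+1$, and $d$ is immaterial when $f=L$. I would base this on the elementary characterisation that $T$ occurs at $p$ iff (i) $T[p\dd i]=T[0\dd i-p]$, i.e.\ $T[0\dd i]$ has period $p$, and (ii) $T[i+1-p\dd n-1]$ is a prefix of $T[j\dd n-1]$, i.e.\ $T[j\dd n-1]$ has period $i+1-j-p$ (with the degenerate values of these ``periods'' allowed and meaning that the whole prefix, resp.\ the whole suffix, is used). Conditions (i)--(ii) automatically make the two matched parts have total length $n$ --- so ``crossing the junction'' needs no separate treatment --- and they force $p\in[0,i+1-j]$.

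The crux of the algorithm is to show that $f$, $L$, $d$ are each pinned down by only $\cO(1)$ candidate positions. For $d$: by the structural lemma $d=\mathrm{per}(T)$ once there are at least three occurrences (a fixed precomputed number), and $d=L-f$ otherwise, so $d$ follows from $f$, $L$ and one occurrence test. For $L$ (largest $p$, shortest matched prefix): the extreme candidate is $p=i+1-j$, valid iff $T[i+1-j\dd i]=T[0\dd j-1]$ --- one longest‑common‑extension (LCE) query on $T$; if it fails, (ii) dictates the next candidate $p=i+1-j-\mathrm{per}(T[j\dd n-1])$, and the arithmetic‑progression structure forces only a constant number of further candidates to matter. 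Symmetrically, for $f$ (smallest $p$): $p=0$ is valid iff $T[i+1\dd n-1]$ is a prefix of $T[j\dd n-1]$ --- again one LCE query --- and otherwise the first occurrence sits at the left endpoint of the maximal period‑$\mathrm{per}(T)$ run of $S$ through the junction, found with $\cO(1)$ more LCE/period queries. I expect this combinatorial collapse to be the \emph{main obstacle}: a priori the positions satisfying (i) (resp.\ (ii)) form $\Theta(\log n)$ arithmetic progressions coming from the border decomposition of $T[0\dd i]$ (resp.\ $T[j\dd n-1]$), and one must argue that their ``intersection'' --- which the structural lemma already guarantees is one progression --- is reachable without walking these $\Theta(n)$‑long border chains.

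It remains to support the needed primitives --- LCE queries on $T$, the shortest period of a prefix $T[0\dd i]$ or a suffix $T[j\dd n-1]$ of $T$, and $\mathrm{per}(T)$ --- in $\cO(1)$ time within $\cO(n/\log_\sigma n)$ space and construction time. For LCE this is the known optimal packed LCE data structure, and the period / internal‑pattern‑matching queries come from the same synchronizing‑set machinery; stored alongside the packed string $T$, these give the claimed $\cO(n/\log_\sigma n)$ size and preprocessing and $\cO(1)$ query time, completing the proof.
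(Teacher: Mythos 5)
Your overall architecture (occurrences form a single arithmetic progression; precompute $\mathrm{per}(T)$ and a packed LCE structure; answer with $\cO(1)$ constant-time tests) matches the paper's, and your characterization of ``$T$ occurs at $p$'' via conditions (i)--(ii) is correct. But the proof has a genuine gap, and you have located it yourself: the claim that after the two extreme candidates $p=0$ and $p=i+1-j$ fail, ``the arithmetic-progression structure forces only a constant number of further candidates to matter'' is never proved. The folklore AP fact tells you the \emph{set} of occurrences is one progression, but it does not by itself tell you how to \emph{find} its endpoints without probing candidates drawn from the period sets of $T[0\dd i]$ and $T[j\dd n-1]$, which a priori consist of $\Theta(\log n)$ progressions each. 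Your fallback candidate $p=i+1-j-\mathrm{per}(T[j\dd n-1])$ satisfies condition (ii) but need not satisfy (i), and you give no bound on how many further candidates must be tried. This is exactly the combinatorial content that the paper supplies and your proposal lacks.

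The missing ingredient is the paper's structural lemma: $T'=T[0\dd i]T[j\dd n-1]$ has an occurrence of $T$ that is neither its length-$n$ prefix nor its length-$n$ suffix \emph{if and only if} $T$ is periodic with $p=\mathrm{per}(T)<i-j+1$ and $(i-j+1)\bmod p=0$. (The forward direction comes from observing that an internal occurrence forces the overlap string $F$ of length $i-j+1$ to have two periods summing to $|F|$, whence by Fine--Wilf their gcd divides $|F|$ and periodizes all of $T$; the reverse direction shows that under the divisibility condition \emph{every} multiple of $p$ up to $i+1-j$ is an occurrence.) With this lemma the candidate enumeration you worry about disappears entirely: if the periodicity-plus-divisibility test (one $\cO(1)$ arithmetic check against the precomputed $\mathrm{per}(T)$) succeeds, the answer is the full progression $0,p,2p,\dots,i+1-j$ with no further verification; if it fails, the only possible occurrences are the prefix and suffix ones, each settled by a single LCE query. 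This also removes your need for an $\cO(1)$-time ``shortest period of an arbitrary suffix/prefix of $T$'' primitive, which is considerably heavier machinery than the two LCE structures the paper uses. Without proving this lemma (or an equivalent statement), your query algorithm is not established to run in $\cO(1)$ time.
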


In the word RAM model with $w$-bit machine words and $w=\Omega(\log n)$, 
$T$ is represented as an array: each letter occupies one
machine word. However, a single letter can be represented using $\lceil \log \sigma \rceil$ bits (i.e., packed representation), which could be (significantly) less than $w$ (e.g., for a constant-sized alphabet). Thus, in the word RAM model, it takes $\cO(n/\log_\sigma n)$ words to store $T$ and $\cO(n/\log_\sigma n)$ time to read it. Hence \cref{the:main} is \emph{optimal} with respect to the construction time and the query time. 

\paragraph{Border tree.} We start with an informal description of the border tree data structure.
The classic KMP algorithm~\cite{DBLP:journals/siamcomp/KnuthMP77} constructs an automaton over string $T=T[0\dd n-1]$. The automaton consists of an initial state and: (1) one state per prefix of $T$ numbered from $0$ to $n-1$; 
(2) a \emph{success} transition from state $i$ to state $i+1$; and (3) a \emph{failure} transition from state $i$ to the state representing the longest string that is both a prefix and a suffix (known as \emph{border}) of $T[0\dd i]$.
The failure transitions form the \emph{failure tree}. Thus any path in the failure tree from the root to a state $i$
specifies all the borders of $T[0\dd i]$. Since the number of distinct borders of $T[0\dd i]$ can be $\Theta(n)$, we apply a grouping of the borders based on periodicity that results in the \emph{border tree}: a compacted version of the failure tree with $\cO(n)$ states and $\cO(\log n)$ depth. The construction time is $\cO(n)$ and the size of the data structure is $\cO(n)$. To answer a \emph{prefix-suffix} query $\PrefSuf(i,j)$, we use the border tree $\mathcal{T}$ of $T$ and the border tree $\mathcal{T}^R$ of $T^R=T[n-1]\ldots T[0]$. Given that $T$ is of a fixed length $n$ and the border tree has $\cO(\log n)$ depth, we need to check $\cO(\log n)$ pairs of states: one from $\mathcal{T}$ and one from $\mathcal{T}^R$. For each pair, we solve one linear equation in $\cO(1)$ time to check for the length constraints. The query time is $\cO(\log n + \Occ)$.

Gu, Farach, and Beigel~\cite{DBLP:conf/soda/GuFB94} used prefix-suffix queries in their dynamic text indexing algorithm to efficiently locate the occurrences of a pattern spanning an edit operation in the text by maintaining the longest prefix and the longest suffix of the pattern occurring right before and right after the edit, respectively. Since its introduction~\cite{DBLP:conf/soda/GuFB94}, the border tree has been used for several pattern matching tasks (e.g.,~\cite{DBLP:journals/jal/Ferragina97,DBLP:journals/jcss/AmirBF96,DBLP:journals/talg/AmirLLS07,DBLP:conf/wabi/Ascone0CEGGP24}).

\paragraph{Our contribution.} We present a completely different and remarkably simple data structure that can be constructed in the optimal $\cO(n/\log_\sigma n)$ time in the word RAM model and answers $\PrefSuf(i,j)$ queries in the optimal $\cO(1)$ time. We remark
that it is quite standard to represent the set of occurrences of a string $X$ in another string $Y$ with $|Y|<2|X|$ in $\cO(1)$ space due to the following folklore fact:

\begin{fact}[\cite{DBLP:conf/icalp/PlandowskiR98}]
Let $X$ and $Y$ be strings with $|Y| < 2|X|$. The set of occurrences (starting positions) of $X$ in $Y$ forms a single arithmetic progression.    
\end{fact}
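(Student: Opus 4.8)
The plan is to reduce to the case of exactly three occurrences and to settle that case by a periodicity argument, using the hypothesis $|Y| < 2|X|$ in the form ``the span of the occurrences is smaller than $|X|$''.

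Write $m = |X|$, $n = |Y|$, and let $p_1 < p_2 < \cdots < p_k$ be the starting positions of the occurrences of $X$ in $Y$. If $k \le 2$ there is nothing to prove, so assume $k \ge 3$. Since every occurrence lies in $[0, n-m]$, the span satisfies $p_k - p_1 \le n - m < m$, so any two occurrences overlap, and aligning the two copies of $X$ on their overlap shows that $p_i - p_j$ is a period of $X$, necessarily smaller than $m$, for all $i > j$. Then I would induct on $k$. For $k \ge 4$, I restrict $Y$ to the prefix window ending at the $(k-1)$-st occurrence and to the suffix window starting at the second occurrence; each window has length at most $(p_k - p_1) + m < 2m$ and its set of occurrences of $X$ is exactly $\{p_1,\dots,p_{k-1}\}$, resp.\ $\{p_2,\dots,p_k\}$, so by the induction hypothesis both are arithmetic progressions. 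The first one forces $p_{i+1} - p_i = p_2 - p_1$ for all $i \le k-2$, the second forces $p_k - p_{k-1} = p_3 - p_2$, and since $k \ge 4$ these common differences are equal, so the whole set is an arithmetic progression. It remains to handle the base case $k = 3$.

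For $k = 3$, translate so that $p_1 = 0$; then $0 < p_2 < p_3 < m$ and the goal is to show $p_3 = 2p_2$. Put $a = p_2$ and $b = p_3 - p_2$: both are periods of $X$ and $a + b = p_3 < m$, so by the Fine--Wilf theorem $g := \gcd(a,b)$ is also a period of $X$. Suppose $a \ne b$; replacing $(X,Y)$ by their reversals if necessary (this swaps $a$ and $b$ and preserves the statement) we may assume $a > b$, hence $0 < g < p_2$. The crux is to show that $X$ nevertheless occurs at position $g$ of $Y$, contradicting the fact that $p_2$ is the second occurrence: on $[g, m)$ the occurrence at $0$ together with periodicity $g$ of $X$ gives $Y[g\dd m) = X[g\dd m) = X[0\dd m-g)$, while on $[m, m+g)$ one copies from the occurrence at $p_3$ and uses that $p_3 - g$ is again a period of $X$ --- it is a positive multiple of $g$ (because $g \mid p_3$) that stays below $m$ --- to match the copied block against $X[m-g\dd m)$. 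This forces $a = b$ and finishes the base case.

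The main obstacle is exactly this last step in the case $k = 3$: producing the spurious occurrence at $g$. This is where $|Y| < 2|X|$ does the essential work --- it is what keeps $p_3 < m$, so that Fine--Wilf applies and so that the two windows $[g,m)$ and $[m,m+g)$ are short enough for the periodicity chaining to close up --- and it relies on the divisibility $g \mid p_3$ coming from $g = \gcd(a,b)$. Everything else (the overlap-to-period observation and the inductive peeling of the first and last occurrence) is routine.
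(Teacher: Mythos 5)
The paper does not prove this statement at all: it is stated as a folklore \textbf{Fact} with a citation to Plandowski and Rytter, so there is no in-paper argument to compare against. Judged on its own, your proof is correct and complete. The routine parts check out: since $p_k-p_1\le |Y|-|X|<|X|$, every pairwise difference of occurrences is a period of $X$; the peeling step for $k\ge 4$ is sound because the prefix window ending at the $(k-1)$-st occurrence and the suffix window starting at the second occurrence each have length less than $2|X|$ and contain exactly $k-1$ occurrences, and for $k\ge 4$ the two inherited common differences share the gap $p_3-p_2$, so they coincide. The base case $k=3$ is the real content, and your argument there is valid: with $a=p_2$, $b=p_3-p_2$ and $a+b=p_3<|X|$, Fine--Wilf makes $g=\gcd(a,b)$ a period; if $a\ne b$ then (after the reversal symmetry, which indeed swaps $a$ and $b$) $0<g<p_2$, and the occurrence at $g$ is genuinely forced --- on $[g,m)$ by the occurrence at $0$ plus period $g$, and on $[m,m+g)$ by the occurrence at $p_3$ plus the period $p_3-g$, which is a positive multiple of $g$ below $m$ and hence a period; both index shifts stay inside $[0,m)$, so the spurious occurrence contradicts the minimality of $p_2$. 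One cosmetic slip: the prefix window $Y[0\dd p_{k-1}+m-1]$ has length $p_{k-1}+m$, which is bounded by $|Y|<2|X|$ rather than by $(p_k-p_1)+m$ as written, but the conclusion you need ($<2|X|$) holds either way.
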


However, to the best of our knowledge, the set of occurrences of $T$ in $T[0\dd i]T[j\dd n-1]$, \textbf{for all $i,j\in[0,n)$}, has not been characterized before. To arrive at \cref{the:main}, we prove a structural lemma that lets us encode the output of $\PrefSuf(i,j)$ queries for any $i,j\in[0,n)$ in \emph{constant time and space}. The border tree, and, in particular prefix-suffix queries, have been mainly used in dynamic text indexing algorithms, which however involve many other crucial primitives to arrive at their final query time. We show here instead a new direct application of our data structure in pattern matching on node-labeled graphs, a very active topic of research~\cite{DBLP:journals/talg/EquiMTG23,DBLP:journals/jacm/CotumaccioDPP23,DBLP:conf/wabi/Ascone0CEGGP24,DBLP:conf/cpm/AlankoCCKMP24}. In particular,
we formalize \emph{bipartite pattern matching} as a core problem that underlies any algorithm for pattern matching on node-labeled graphs. 
For intuition, consider two nodes $u$ and $v$ in a directed graph where nodes are labeled by \emph{strings}. Some suffixes of node $u$ match some prefixes of a given pattern $P$, and some prefixes of node $v$ match some suffixes of $P$. 
We would like to have a data structure, constructed over $P$, that takes $\cO(1)$ time to process the directed edge $(u,v)$. Namely, in this setting, we find all occurrences of $P$ spanning \emph{at most two nodes} of the graph. (It is trivial to find the occurrences of $P$ in a single node using any linear-time pattern matching algorithm~\cite{DBLP:journals/siamcomp/KnuthMP77}.)
Indeed, the bipartite pattern matching problem has been (implicitly) introduced by Ascone et al.~\cite{DBLP:conf/wabi/Ascone0CEGGP24} for pattern matching on \emph{block graphs} (a restricted version of node-labeled graphs~\cite{DBLP:conf/wabi/MakinenCENT20,DBLP:journals/algorithmica/EquiNACTM23,DBLP:journals/tcs/RizzoENM24}), and the border tree~\cite{DBLP:conf/soda/GuFB94} was used to solve it. If we apply \cref{the:main} on $P$, we can answer \emph{any such query in $\cO(1)$ time} instead of $\cO(\log |P| + \Occ)$ time.

\paragraph{Paper organization.} In \cref{sec:solution}, we present the proof of \cref{the:main}.
In \cref{sec:app}, we present the application of \cref{the:main} on bipartite pattern matching.
We conclude this paper in \cref{sec:fin}.

\section{Optimal prefix-suffix queries}\label{sec:solution}

For a fixed pair $(i,j)$, we set $T'=T[0\dd i]T[j\dd n-1]$. In particular, $T'$ is a string of length $n+(i-j+1)$.
Let us remark that the relevant cases for $\PrefSuf(i,j)$ are all $i,j\in[0,n)$ such that $i \geq j$, otherwise the answer to the query is trivial: if $j=i+1$, then we have a single occurrence of $T$ starting at position $0$ of $T'$. If $j>i+1$, then we have no occurrence of $T$ in $T'$ because $|T'|<n$.
Hence we assume $i \geq j$.

We start with a few standard definitions on strings from~\cite{DBLP:books/daglib/0020103}. An integer $p>0$ is a \emph{period} of a string $P$ if $P[i] = P[i + p]$, for all $i \in [0,|P|-p)$. The smallest period of $P$ is referred to as \emph{the period} of $P$ and is denoted by $\textsf{per}(P)$. A string $P$ is called \emph{periodic} if $\textsf{per}(P)\leq |P|/2$. A \emph{border} $B$ of a string $P$ is a string of length $|B|<|P|$ so that $B$ occurs as both a prefix and a suffix of $P$.
\cref{fct:per}, stating that the notion of border and the notion of period are dual, is well-known.

\begin{fact}[\cite{DBLP:books/daglib/0020103}]\label{fct:per}
For every period $p$ of a string $P$, there is a border of length $|P|-p$. In particular, the longest border of $P$ is of length $|P|-\textsf{per}(P)$.
\end{fact}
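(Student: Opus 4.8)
The plan is to set up the standard correspondence between periods of $P$ and borders of $P$ given by $p \mapsto |P|-p$, and then read off both assertions from it. Write $m=|P|$ throughout.

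First I would prove the forward direction. Fix a period $p$ of $P$, so that $P[k]=P[k+p]$ for all $k\in[0,m-p)$. Consider the string $B:=P[0\dd m-p)$, which has length $m-p$ and is by definition a prefix of $P$. It remains to show $B$ is also a suffix of $P$, i.e.\ that $B=P[p\dd m)$. Comparing these two length-$(m-p)$ strings position by position, for each $k\in[0,m-p)$ we need $P[k]=P[p+k]$, which is precisely the defining property of the period $p$. Since $p>0$ we have $|B|=m-p<m$, so $B$ is a border of $P$ of the required length.

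Next, for the ``in particular'' clause I would also argue the converse: a border of length $\ell$ gives rise to a period $m-\ell$. Indeed, if $B=P[0\dd \ell)=P[m-\ell\dd m)$, then setting $p=m-\ell$ we get $P[k]=P[p+k]$ for all $k\in[0,\ell)=[0,m-p)$, i.e.\ $p$ is a period of $P$. Hence $p$ is a period of $P$ if and only if $m-p$ is the length of some border of $P$. Under this bijection the smallest period $\textsf{per}(P)$ corresponds to the largest possible border length, namely $m-\textsf{per}(P)$, which is exactly what is claimed.

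There is no real obstacle here; the statement is essentially bookkeeping. The only points requiring a little care are keeping the half-open index ranges consistent (the period condition ranges over $[0,m-p)$, and this must coincide with the overlap of the length-$(m-p)$ prefix and the length-$(m-p)$ suffix), and observing that $p>0$ (equivalently $\ell<m$) is exactly what makes $B$ a \emph{proper} prefix and suffix, hence a genuine border rather than all of $P$.
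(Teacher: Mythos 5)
Your proof is correct: the argument that the length-$(|P|-p)$ prefix coincides with the length-$(|P|-p)$ suffix is exactly the period condition, and the converse direction turns the period/border correspondence into a bijection from which the ``in particular'' clause about $\textsf{per}(P)$ follows immediately. The paper states this fact with a citation and gives no proof of its own, so there is nothing to compare against; your write-up is the standard textbook argument and is complete.
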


The periods of any string $P$ also satisfy the following well-known lemma.

\begin{lemma}[Periodicity lemma \cite{periodicity}]\label{lem:periodicity}
If $p$ and $q$ are both periods of a string $P$ and satisfy
$p + q \leq |P|$, then $\gcd(p, q)$ is a period of $P$.
\end{lemma}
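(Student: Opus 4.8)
The plan is to prove the statement by induction on $p+q$, which amounts to running the subtractive form of the Euclidean algorithm while maintaining the invariant that the current pair consists of periods of $P$. Since the hypothesis and the conclusion are symmetric in $p$ and $q$, I assume $p\ge q\ge 1$. In the base case $p=q$, we have $\gcd(p,q)=p$, which is a period of $P$ by hypothesis, and we are done. In the inductive step $p>q$, the whole argument reduces to showing that $p-q$ is again a period of $P$: granting that, the pair $(p-q,q)$ consists of two periods of $P$ with $(p-q)+q=p\le|P|-q\le|P|$ and $(p-q)+q=p<p+q$, so the inductive hypothesis applies and yields that $\gcd(p-q,q)=\gcd(p,q)$ is a period of $P$. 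Because $p+q$ is a positive integer that strictly decreases at each step and the subtractive Euclidean algorithm halts at the pair $(\gcd(p,q),\gcd(p,q))$, i.e.\ at the base case, this closes the induction.

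The one substantive step, then, is the claim that if $p$ and $q$ are periods of $P$ with $p>q\ge 1$ and $p+q\le|P|$, then $p-q$ is a period of $P$, and I would prove it by a direct case analysis. Fix an index $i\in[0,|P|-(p-q))$; the goal is $P[i]=P[i+(p-q)]$. If $i\ge q$, then $P[i]=P[i-q]$ since $q$ is a period and $i-q\in[0,|P|-q)$ (using $i\ge q$ and $i<|P|$), and $P[i-q]=P[(i-q)+p]=P[i+(p-q)]$ since $p$ is a period and $i-q\in[0,|P|-p)$ (using $i\ge q$ and $i<|P|-(p-q)$). If $i<q$, then $i+p<q+p\le|P|$, so $i\in[0,|P|-p)$ and $P[i]=P[i+p]$ since $p$ is a period; moreover $i+p-q\in[0,|P|-q)$ (using $p\ge q$ and $i+p<|P|$), so $P[i+p-q]=P[(i+p-q)+q]=P[i+p]$ since $q$ is a period, whence $P[i]=P[i+p]=P[i+(p-q)]$. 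In both cases the desired equality holds, so $p-q$ is a period of $P$.

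The main obstacle is not conceptual but bookkeeping: in the claim one must check that every position at which periodicity of $p$ or of $q$ is invoked actually lies in its admissible window $[0,|P|-p)$ or $[0,|P|-q)$. This is precisely where the full strength of the hypothesis $p+q\le|P|$ is needed — specifically in the case $i<q$, where $\max(p,q)\le|P|$ alone would not guarantee $i+p<|P|$; and indeed the statement is false if the bound $p+q\le|P|$ is dropped. An alternative, more arithmetic route would be to regard $[0,|P|)$ as the vertices of a graph with edges $\{k,k+p\}$ and $\{k,k+q\}$ (whenever both endpoints are in range) and argue that under $p+q\le|P|$ any two positions congruent modulo $\gcd(p,q)$ lie in one connected component, hence carry the same letter; but the Euclidean induction above is shorter to write down.
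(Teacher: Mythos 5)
Your proof is correct: the reduction to showing that $p-q$ is a period, the index-range checks in both cases ($i\ge q$ and $i<q$), and the induction on $p+q$ via subtractive Euclid are all sound, and you correctly isolate where $p+q\le|P|$ is indispensable. The paper itself offers no proof to compare against --- it states the periodicity lemma as a known result with a citation --- so there is nothing to reconcile; what you have written is the standard argument for the weak (non-$\gcd$-improved) form of the Fine--Wilf lemma, which is exactly the form the paper uses.
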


We proceed by showing the main structural lemma of this work.

\begin{lemma}\label{lem:main_lemma}
For any string $T$ of length $n$ and any $i,j\in[0,n)$ such that $i \geq j$, $T[0\dd i]T[j\dd n-1]$ has an occurrence of $T$ that is neither a prefix nor a suffix of  $T[0\dd i]T[j\dd n-1]$ if and only if $T$ is periodic with period $p=\textsf{per}(T)<i-j+1$ and $(i-j+1) \mod p = 0$.
\end{lemma}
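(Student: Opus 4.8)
The plan is to analyze an occurrence of $T$ in $T'=T[0\dd i]T[j\dd n-1]$ that starts at some position $s$ with $1 \le s$ and $s + n < |T'| = n + (i-j+1)$, i.e., an occurrence that is neither a prefix (which would be $s=0$) nor a suffix (which would be $s = i-j+1$). Write $d = i-j+1 \ge 1$ for the "gap" length, so $|T'| = n+d$, the constraint on $s$ is $1 \le s \le d-1$, and in particular such an occurrence can exist only if $d \ge 2$. The key observation is that $T'$ has $T$ as a prefix (occupying $[0,n)$) and as a suffix (occupying $[d,n+d)$), and these two copies of $T$ overlap in at least $n - d$ positions; a third occurrence at position $s$ gives further overlapping copies. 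I would extract periodicity from these overlaps and then argue the divisibility condition.

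**The "if" direction.** Suppose $T$ is periodic with $p = \textsf{per}(T) < d$ and $p \mid d$. I claim $s = d - p$ works: note $1 \le s \le d-1$ since $0 < p < d$. To see that $T$ occurs at position $s$ in $T'$, observe that $T'$ itself has period $p$: indeed $T'[0\dd n-1] = T$ has period $p$, the suffix $T'[d\dd n+d-1] = T$ has period $p$, these two intervals overlap (since $d < n$, as $d \le i+1 \le n$ actually $d = i-j+1 \le i+1 \le n$, and we need $n - d \ge 1$; if $d = n$ then $i = n-1, j=0$ and the suffix occurrence coincides with checking $T=T$, so assume $d<n$ — one should handle the boundary $d=n$ separately, where the claim still follows since $p\mid n$ makes $T$ periodic-at-$s$). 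The overlap has length $n - d \ge p$ because $p < d \le$ ... here I must be careful: I need $n - d \ge p$, which may fail. Instead, the cleaner route: since $p \mid d$ and $T'[0\dd n-1]$ and $T'[d\dd n+d-1]$ both have period $p$ and their union is all of $[0, n+d)$ with the two intervals overlapping in $[d, n)$ which is nonempty iff $d < n$; when $d<n$ the overlap is nonempty so $T'$ has period $p$ throughout, hence $T'[s\dd s+n-1]$, being a length-$n$ factor, equals the length-$n$ factor of a $p$-periodic string starting at position $s \equiv 0 \pmod p$, which is exactly $T$. When $d = n$ handle directly.

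**The "only if" direction — the main obstacle.** Suppose $T'$ has an occurrence of $T$ at position $s$ with $1 \le s \le d-1$. The prefix occurrence of $T$ is at $0$ and this new one at $s$, both of length $n$, and they overlap in $[s, n)$, a window of length $n - s \ge n - (d-1) > n - d$. Since $d \le n$, this overlap has length $\ge n - d + 1 \ge 1$; comparing the two copies gives that $s$ is a period of $T[0\dd n-1] \cap$ shifted... more precisely $s$ is a period of the prefix $T[0\dd n-1]$ restricted appropriately — actually $s$ is a period of $T$ itself: for $k \in [0, n-s)$ we have $T[k] = T'[k] = T'[s+k] = T[s+k]$ wait $T'[s+k] = T[k]$? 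No: $T'[s\dd s+n-1] = T$ means $T'[s+k] = T[k]$, and $T'[k] = T[k]$ for $k < n$. So for $k+s < n$: $T[k] = T'[k]$... I need $T'[s+k]$, and $s + k < n$ when $k < n - s$, giving $T'[s+k] = T[s+k]$, hence $T[k] = T[s+k]$ for all $k \in [0, n-s)$ — so $s$ is a period of $T$. Now I also need $d$ to be a "period-like" quantity: compare the prefix copy at $0$ with the suffix copy at $d$; they overlap in $[d, n)$ (length $n - d \ge 0$); if $d < n$ this shows $d$ is a period of $T$ as well. Then $s$ and $d$ are both periods of $T$, $s < d \le n$, and I want $s + d \le n$ to invoke the periodicity lemma (\cref{lem:periodicity}) — but $s + d$ might exceed $n$; instead use $s$ and the *smallest* period $\textsf{per}(T) = p \le s$: if $p < d$ I can try to get $p + s \le $ something, or directly argue via \cref{lem:periodicity} on $p$ and $d$ when $p + d \le n$. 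The delicate part is when $d$ is large (close to $n$); here I'd argue that $s \le d - 1 < n$ and $s$ a period with $2s \le \ldots$ — if $2s \le n$, periodicity lemma on the period $s$ with itself is vacuous; rather apply it to $p = \textsf{per}(T) \le s$ and $s$: since $p + s \le 2s$, if $2s \le n$ then... this still needs $s \le n/2$. I expect the genuinely hard case is $d > n/2$, which forces $s$ close to $d$ hence possibly $s > n/2$; here I'd instead note that if $s > n/2$ then $s$ being a period means $T$ has a border of length $n - s < n/2$, and chase borders/periods down to $p$, using $p \mid s$ (since $p = \textsf{per}(T)$ and $s$ a period, by the periodicity lemma applied iteratively $p \mid s$ whenever $s \le n$... actually $p \mid s$ is not automatic, but it holds when $s < n$ is a period and $T$ is periodic, via repeated application of \cref{lem:periodicity}). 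Finally, from $p \mid s$ and needing $p \mid d$: use that $T'$ restricted to $[0, s+n)$ has period $p$ (combining the copies at $0$ and $s$, which overlap since $s < n$), so the suffix copy of $T$ at position $d$ forces $d \equiv 0 \pmod p$ provided $d \le s + n - n = s$? — no, provided the copy at $d$ overlaps the $p$-periodic region $[0, s+n)$ in a window of length $\ge p$, i.e. $(s+n) - d \ge p$, equivalently $d \le s + n - p$. Since $s \ge p$ this is $d \le n + s - p$, which holds as $d \le n \le n + s - p$ iff $s \ge p$: true. So $T'[d\dd d+n-1]$ starts inside the $p$-periodic prefix of $T'$ far enough, giving $T[0] = T'[d] = T'[d \bmod p]$ and more generally the suffix copy agrees with the periodic pattern iff $p \mid d$ — concluding $p = \textsf{per}(T) < d$ (since $s < d$ and $p \le s$, and if $p = d$... then $s < d = p \le s$, contradiction, so $p < d$... wait $p \le s < d$ gives $p < d$ directly) and $p \mid d$. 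I expect the bookkeeping around which overlaps are long enough to transfer periodicity — especially when $d$ is large — to be the crux; everything else is routine index-pushing.
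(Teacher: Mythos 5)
Your forward (``only if'') direction has a genuine gap at exactly the point you flag as the crux, and the missing idea is a specific one. You compare the middle occurrence at position $s$ only against the \emph{prefix} occurrence at $0$ (getting that $s$ is a period of $T$) and you compare the prefix occurrence against the suffix occurrence at $d$ (getting that $d$ is a period of $T$); you then correctly observe that $s+d$ may exceed $n$, so \cref{lem:periodicity} does not apply to the pair $(s,d)$, and your attempts to route around this (case analysis on $d>n/2$, gluing $p$-periodic regions) remain unfinished. The comparison you never make is the middle occurrence at $s$ against the \emph{suffix} occurrence at $d$: their overlap $[d,s+n)$ has length $s+n-d>0$ and yields that $d-s$ is also a period of $T$. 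Now the two periods $s$ and $d-s$ sum to exactly $d\le n$, so \cref{lem:periodicity} applies unconditionally, $\gcd(s,d-s)$ is a period of $T$ that divides $d$, and one more application of the lemma (to $\textsf{per}(T)$ and $\gcd(s,d-s)$, whose sum is at most $d\le n$) gives $\textsf{per}(T)\mid d$ with $\textsf{per}(T)\le d/2$. This is precisely what the paper does, phrased on the overlap string $F$ of length $i-j+1=d$, which inherits the two periods $j'=s$ and $|T'|-i'-1=d-s$ summing to $|F|$; without this third comparison your argument does not close, especially in the regime $d>n/2$ that you yourself identify as problematic.

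A secondary issue: in the ``if'' direction you assert that because the prefix copy $T'[0\dd n)$ and the suffix copy $T'[d\dd n+d)$ both have period $p$ and their overlap is \emph{nonempty}, $T'$ has period $p$ throughout. That general principle is false --- gluing two $p$-periodic factors into one requires an overlap of length at least $p$ (e.g., $\texttt{aba}$ and $\texttt{aca}$ inside $\texttt{abaca}$ both have period $2$ and overlap in one position, yet $\texttt{abaca}$ does not have period $2$). The conclusion is nevertheless true here, but the correct justification goes through $p\mid d$: any position $k$ of $T'$ in the left block and $k+p$ in the right block satisfy $T'[k]=T[k]$ and $T'[k+p]=T[k+p-d]$ with $k\equiv k+p-d\pmod p$, and a $p$-periodic string has equal letters at congruent positions. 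Since you already have $p\mid d$ in hand, this is a local repair rather than a structural one, unlike the gap in the forward direction.
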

\begin{proof}
Let us set $T'=T[0\dd i]T[j\dd n-1]$ with $|T'|=n+(i-j+1)$. Inspect~\cref{fig:main_lemma} for an illustration.

($\Rightarrow$) For the forward implication, assume $T'[j'\dd i']=T$ with $j'>0$ and $i'<|T'|-1$; namely, $T$ has an occurrence in $T'$ that it is neither a prefix nor a suffix of $T'$. By hypothesis, $|T[0\dd i]|\leq n$ and $|T[j\dd n)|\leq n$.
Since $T[0\dd i]=T'[0\dd i]=T'[j'\dd i+j']$, by \cref{fct:per}, $T'[0\dd i+j']$ has period $j'$ (in \cref{fig:main_lemma}, $T'[0\dd i]=T'[j'\dd i+j']=\texttt{aabaabaaba}$ and $j'=3$). Symmetrically, since $T[j\dd n)=T'[i+1\dd |T'|)=T'[i+1-(|T'|-i'-1)\dd i']$, by \cref{fct:per}, $T'[i+1-(|T'|-i'-1)\dd |T'|)$ has period $|T'|-i'-1$ (in \cref{fig:main_lemma}, $T'[i+1\dd |T'|)=T'[i+1-(|T'|-i'-1)\dd i']=\texttt{abaabaaba}$ and $|T'|-i'-1=3$). 

Note that because $i\geq j$, string $T'[0\dd i]=T[0\dd i]$ has a nonempty string $F$ of length $i-j+1$ as a suffix \emph{and} string $T'[i+1\dd |T'|)=T[j\dd n)$ has the same string $F$ as a prefix (in \cref{fig:main_lemma}, $F=\texttt{abaaba}$). 

\begin{claim}\label{clm:len}
$|F|=j' + (|T'|-i'-1)$.    
\end{claim}
\begin{proof}
We have $|T'|-i'-1+j'=n+(i-j+1)-i'-1+j'$ by substituting the length of $T'$.
Then $n+(i-j+1)-i'-1+j'=n+(i-j+1)-n=i-j+1=|F|$ because $|T'[j'\dd i']|=n$.
\end{proof}

By the above discussion and \cref{clm:len}, string $F$ has \emph{both} periods $j'$ and $|T'|-i'-1$, and so we can apply \cref{lem:periodicity} to get that
$F$ has period $\gcd(j', |T'|-i'-1)$,
which is also a divisor of $|F|=i - j + 1$,
and thus that the whole $T'[j'\dd i']=T$ is periodic with period $\gcd(j', |T'|-i'-1)$.
In particular, $\gcd(j', |T'|-i'-1)$ is either equal to the period $p=\textsf{per}(T)$ or its multiple. Finally, since $j'>0$, $i'<|T|-1 \implies |T'|-i'-1>0$, and $|F|=j' + (|T'|-i'-1)$, it also follows that $\textsf{per}(T)<|F|$.

($\Leftarrow$) For the reverse implication, if $T$ is periodic with period $p=\textsf{per}(T)<i-j+1$ and $(i-j+1) \mod p = 0$, then $T'=T[0\dd i]T[j\dd n-1]$ is also periodic, and $T$ occurs at positions $0,p,2p,\ldots$ of $T'$. In particular, $T$ has at least three occurrences in $T'$ because $p<i-j+1$ and $|T'|=n+(i-j+1)$.
\end{proof}

\begin{figure}
    \centering
    \includegraphics[width=0.4\linewidth]{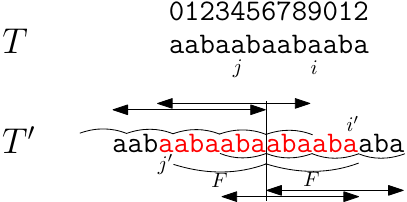}
    \caption{Illustration of \cref{lem:main_lemma} for $T=\texttt{aabaabaabaaba}$ ($n=13$), $i=9$, and $j=4$. We colored red an occurrence $T'[j'\dd i']=T$ with $j'=3>0$ and $i'=15<|T'|-1$. The pair of lines with arrows on the top of $T'$ show that $T'[0\dd i+j']$ has period $j'=3$. The pair of lines with arrows at the bottom of $T'$ show analogously that $T'[i+1-(|T'|-i'-1)\dd |T'|)$ has period $|T'|-i'-1=3$. We further have that $|F|=j' + (|T'|-i'-1)=3+3=6$ and that $T'[j'\dd i']=T$ is periodic with period $\textsf{per}(T)=3$.}
    \label{fig:main_lemma}
\end{figure}

The following corollary follows immediately from \cref{lem:main_lemma}.

\begin{corollary}\label{cor:compact}
    If string $T$ is periodic with period $p=\textsf{per}(T)<i-j+1$ and $(i-j+1) \mod p = 0$, with $i \geq j$, the set of occurrences of $T$ in $T'=T[0\dd i]T[j\dd n-1]$ forms a single arithmetic progression with difference $p$.
\end{corollary}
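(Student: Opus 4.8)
\section*{Proof proposal for \cref{cor:compact}}

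The plan is to pin down the occurrence set of $T$ in $T'$ exactly. The argument for the reverse implication of \cref{lem:main_lemma} already shows that, under the stated hypotheses, $T'$ is itself periodic with period $p$, that its length-$n$ prefix $T'[0\dd n-1]$ equals $T$, and consequently that $T$ occurs at every multiple of $p$ in the range $[0,i-j+1]$ (with $i-j+1$ itself a multiple of $p$ by hypothesis). Since $p<i-j+1$, positions $0$ and $p$ are both legal start positions and both occurrences. So it suffices to prove the converse: \emph{every} occurrence of $T$ in $T'$ starts at a multiple of $p$. This forces the occurrence set to be exactly $\{0,p,2p,\ldots,i-j+1\}$, an arithmetic progression of difference exactly $p$ (the occurrences $0$ and $p$ being consecutive rules out a larger common difference).

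For the converse, suppose $T$ occurs in $T'$ starting at position $k$, where necessarily $0\le k\le |T'|-n=i-j+1$; we may assume $k\ge 1$, the case $k=0$ being trivial. Consider the prefix $P=T'[0\dd k+n-1]$ of $T'$, which is well defined because $k+n\le |T'|=n+(i-j+1)$. On the one hand, $P$ inherits the period $p$ from $T'$. On the other hand, both $T'[0\dd n-1]$ and $T'[k\dd k+n-1]$ equal $T$ and are, respectively, a prefix and a suffix of $P$ of length $n=|P|-k$; hence by \cref{fct:per}, $k$ is a period of $P$. Because $p+k\le n+k=|P|$, \cref{lem:periodicity} applies and yields that $\gcd(p,k)$ is a period of $P$, and therefore also of its prefix $T$. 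Since $p=\textsf{per}(T)$ is the smallest period of $T$, we must have $\gcd(p,k)\ge p$, forcing $\gcd(p,k)=p$, that is, $p\mid k$, as desired.

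The only routine point I would still spell out is the boundary bookkeeping behind the claim ``$T'$ has period $p$ and $T'[0\dd n-1]=T$'': one checks the index arithmetic across the junction between $T[0\dd i]$ and $T[j\dd n-1]$ inside $T'$, where the congruence $j\equiv i+1 \pmod p$ — equivalently $(i-j+1)\equiv 0 \pmod p$ — is exactly what makes the two pieces line up. No step presents a genuine obstacle; the heart of the argument is the single application of \cref{lem:periodicity} in the previous paragraph, and everything else is a direct invocation of \cref{fct:per} and of the reverse direction of \cref{lem:main_lemma}.
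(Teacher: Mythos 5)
Your proof is correct. The paper itself gives no argument for \cref{cor:compact} beyond ``follows immediately from \cref{lem:main_lemma}'': the intended reading is that the reverse implication supplies the occurrences at $0,p,2p,\ldots$, while the forward implication's proof already shows that any internal occurrence at position $j'$ forces $p$ to divide $\gcd(j',|T'|-i'-1)$, hence to divide $j'$ (and the suffix occurrence sits at $i-j+1\equiv 0 \pmod p$). Your converse argument reaches the same conclusion by a slightly different, self-contained route: you exploit the fact that under the corollary's hypotheses $T'$ itself has period $p$ (not available in the lemma's forward direction), and then apply \cref{fct:per} and \cref{lem:periodicity} once to the prefix $T'[0\dd k+n-1]$ together with the minimality of $\textsf{per}(T)$. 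This is a clean packaging that avoids re-running the two-sided argument with the string $F$; what it buys is an explicit description of the occurrence set as exactly $\{0,p,\ldots,i-j+1\}$ and a justification that the common difference is $p$ and not a multiple of it, details the paper leaves implicit. The boundary bookkeeping you defer (that $T'$ has period $p$ and $T'[0\dd n-1]=T$) is indeed routine and hinges exactly on the congruence $(i-j+1)\equiv 0\pmod p$ as you say.
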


\paragraph{Data structure.} Let us recall that $T$ is a string of length $n$ over an integer alphabet $\Sigma=[0,\sigma)$ with $\sigma=n^{\cO(1)}$.
If string $T$ is periodic, the smallest period $p=\textsf{per}(T)$ of string $T$ can be computed in $\cO(n/\log_\sigma n)$ time~\cite{DBLP:journals/corr/KociumakaRRW13,DBLP:journals/tcs/Ben-KikiBBGGW14}. The algorithms of~\cite{DBLP:journals/corr/KociumakaRRW13,DBLP:journals/tcs/Ben-KikiBBGGW14} return $p=\perp$ if $T$ is not periodic within the same time complexity.
The first part of the construction algorithm is to compute this $p$.
We also construct a \emph{longest common extension} (LCE) data structure over $T$.
The LCE data structure answers the following type of queries: for any $i,j\in[0,n)$,
return the length of the longest common prefix (\textsf{LCP}) of $T[i\dd n)$ and $T[j\dd n)$. By reversing $T$, we can answer the symmetric longest common suffix (\textsf{LCS}) queries. \textsf{LCP} and \textsf{LCS} queries can be answered in $\cO(1)$ time after an $\cO(n/\log_\sigma n)$-time preprocessing by constructing the LCE data structure over $T$ from~\cite{DBLP:conf/stoc/KempaK19}. The total construction time and the size of our data structure are in $\cO(n/\log_\sigma n)$.
Notably, this is $\cO(n/\log n)$ for any constant-sized alphabet.

\paragraph{Answering prefix-suffix queries.} Here comes a $\PrefSuf(i,j)$ query.
We simply apply \cref{lem:main_lemma}. If $T$ is periodic with period $p\neq \perp$, we also check whether $p<i-j+1$ and $(i-j+1) \mod p = 0$. 
These checks are performed in $\cO(1)$ time.
If the answer is positive, we return all occurrences in $\cO(1)$ time by applying \cref{cor:compact}. If the answer is negative or $T$ is aperiodic, we trigger two LCE queries. Namely, if $\textsf{LCP}(T[i+1\dd n),T[j\dd n))= n-i-1$, then $T$ occurs as a prefix of $T'=T[0\dd i]T[j\dd n-1]$;
if $\textsf{LCS}(T[0\dd j-1],T[0\dd i])= j$ then $T$ occurs as a suffix of $T'$. 
The total query time is thus $\cO(1)$.

\begin{remark}
It is possible to have an aperiodic string $T$ that occurs as both a prefix \emph{and} a suffix of $T'=T[0\dd i]T[j\dd n-1]$ (we handle this with the two LCE queries). For instance, let $T=\texttt{aababaab}$, $i=5$, and $j=1$.
$T$ is aperiodic and it occurs as both a prefix and a suffix in
$T'=\texttt{aababaababaab}$.
\end{remark}

The algorithm is correct by \cref{lem:main_lemma}.
We have arrived at \cref{the:main}.

\section{Application: Bipartite pattern matching}\label{sec:app}

Pattern matching on node-labeled graphs is an old topic~\cite{ManberWu,DBLP:conf/cpm/Akutsu93,DBLP:conf/cpm/ParkK95,DBLP:journals/jal/AmirLL00} that is regaining a lot of attention~\cite{DBLP:journals/talg/EquiMTG23,DBLP:journals/jacm/CotumaccioDPP23,DBLP:conf/wabi/Ascone0CEGGP24,DBLP:conf/cpm/AlankoCCKMP24} due to its application in computational pangenomics~\cite{DBLP:journals/nc/BaaijensBBVPRS22,DBLP:journals/bioinformatics/GuarracinoHNPG22,DBLP:journals/bioinformatics/GarrisonG23}. 
A \emph{pangenome} is any collection of genomic sequences to be analyzed jointly or to be used as a reference~\cite{DBLP:journals/bib/Consortium18}; e.g., a collection of highly-similar genomes of the same species. A standard pangenome representation is that of a \emph{variation graph}~\cite{DBLP:journals/nc/BaaijensBBVPRS22}: a directed graph $G(V,E,W)$, whose nodes are labeled by nonempty strings, equipped with a set $W$ of distinguished walks corresponding to variants that we want to retain in the representation (inspect \cref{fig:variation_graph}, middle). The pattern matching problem on a variation graph is then defined naturally as the problem of finding occurrences of a pattern $P$ of length $m$ spanning one or more nodes that lie on a walk from $W$. It should be clear that any algorithm solving this general problem \emph{must} also solve the restricted version of it asking for occurrences spanning up to two nodes. 

\begin{figure}[ht]
    \centering
    \includegraphics[width=0.85\linewidth]{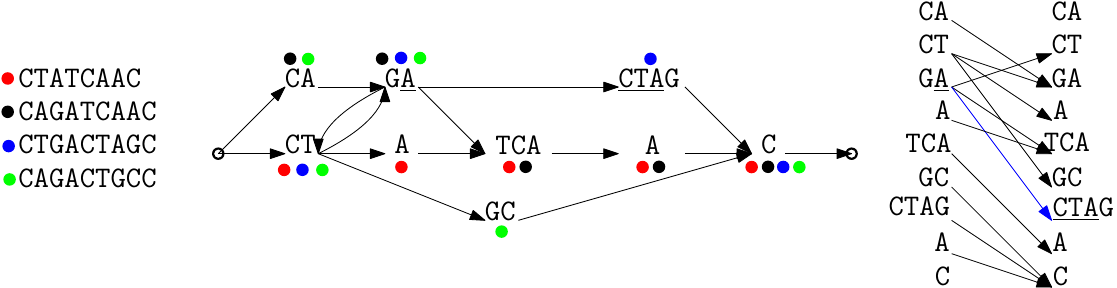}
    \caption{Illustration of a variation graph and of bipartite pattern matching. The pattern $P=\texttt{ACTA}$ has an occurrence (underlined) spanning two nodes lying on a valid walk for the sequence marked blue.}
    \label{fig:variation_graph}
\end{figure}

This restricted version is also interesting theoretically as there exists a conditional lower bound for the general problem of pattern matching on node-labeled graphs: for any constant $\epsilon > 0$, an $\cO(|E|^{1-\epsilon} m)$-time or an $\cO(|E|m^{1-\epsilon})$-time algorithm, even when $P$ is a binary string, cannot be achieved unless the Strong Exponential Time Hypothesis fails~\cite{DBLP:journals/talg/EquiMTG23}. We formalize the restricted version of the problem on a bipartite graph (inspect \cref{fig:variation_graph}, right) and call it the \emph{bipartite pattern matching} problem.

Let $G(U,V,E)$ be a bipartite graph such that $U$ is a set 
of nodes, each labeled by a string; $V$ is a set 
of nodes, each labeled by a string; and $E$ is a set of edges.
Let $N_U$ and $N_V$ denote the total length of the labels in $U$ and $V$, respectively.
We are also given a string $P$ (the pattern) of length $m$. The \emph{bipartite pattern matching} problem asks us to find every pair of nodes $(u,v)$, such that there exists 
a node $u\in U$ with a label having suffix $s$, 
a node $v\in V$ with a label having prefix $p$, $P=s\cdot p$, and $(u,v) \in E$. 
Let $N=N_U+N_V$. Again, we make the standard assumption of an integer alphabet of size polynomial in $N$. We solve this problem in $\cO(N + |E|)$ time after preprocessing $P$ in $\cO(m)$ time.

In preprocessing, we construct: the \emph{KMP automaton}~\cite{DBLP:journals/siamcomp/KnuthMP77} of $P$;
the KMP automaton of $P^R$, the reverse of $P$; 
and the data structure of \cref{the:main} on $P$ in $\cO(m)$ total time.
For each node $u$ in $U$, we find the longest suffix of the label of $u$ that is a prefix of $P$.
This takes $\cO(N_U)$ time using the KMP automaton of $P$. 
For each node $v$ in $V$, we find the longest prefix of the label of $v$ that is a suffix of $P$.
This takes $\cO(N_V)$ time using the KMP automaton of $P^R$.
By \cref{the:main}, we spend $\cO(1)$ time per edge $(u,v)$: we ask a single prefix-suffix query for the prefix of $P$ and the suffix of $P$ found above. Our solution not only finds the relevant pair of nodes: it encodes all occurrences spanning each such pair.

\begin{theorem}\label{the:bpm}
    The bipartite pattern matching problem can be solved in $\cO(N + |E|)$ time after preprocessing the pattern in $\cO(m)$ time.
\end{theorem}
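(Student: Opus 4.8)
The plan is to assemble the theorem directly from the three ingredients already described: the KMP automaton of $P$, the KMP automaton of $P^R$, and the data structure of \cref{the:main} built on $P$. All three are constructed during preprocessing, so the first step is to confirm the $\cO(m)$ preprocessing bound. The two KMP automata are built in $\cO(m)$ time each by the classic construction~\cite{DBLP:journals/siamcomp/KnuthMP77}, and \cref{the:main} applied to a string of length $m$ gives an $\cO(m/\log_\sigma m)=\cO(m)$-time construction of a data structure of size $\cO(m)$ answering $\PrefSuf$ queries on $P$ in $\cO(1)$ time. Hence the claimed preprocessing cost is immediate.

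Next I would handle the node scan. For each $u\in U$ I feed its label through the KMP automaton of $P$, letter by letter, and read off the state reached after the last letter; this state number equals the length of the longest suffix of the label of $u$ that is a prefix of $P$. Summing over all nodes in $U$, this costs $\cO(N_U)$ time, since each transition of the automaton is $\cO(1)$ and $\Sigma$ is an integer alphabet polynomial in $N$. Symmetrically, running each label of a node $v\in V$ (read left to right, equivalently its reverse through the automaton of $P^R$) yields the length of the longest prefix of that label which is a suffix of $P$, in $\cO(N_V)$ total time. Together with reading the input graph this accounts for the $\cO(N)$ term, where $N=N_U+N_V$.

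For the per-edge work, fix $(u,v)\in E$. Let $\ell_u$ be the length computed for $u$ and $\ell_v$ the length computed for $v$; these are the \emph{maximal} overlap lengths. Any occurrence of $P$ that uses exactly the two nodes $u$ and $v$ must split $P$ as $P=s\cdot p$ where $s$ is a suffix of the label of $u$ and $p$ is a prefix of the label of $v$, so $|s|\le \ell_u$ and $|p|\le \ell_v$; moreover $|s|+|p|=m$, so $|s|\ge m-\ell_v$. Writing $P$ itself in the notation of \cref{the:main} (i.e.\ taking the string in that theorem to be $P$, of length $m$), an occurrence with $|s|=k$ corresponds precisely to an occurrence of $P$ inside $P[0\dd \ell_u-1]\,P[m-\ell_v\dd m-1]$ that starts at position $\ell_u-k$; the constraint $m-\ell_v\le k\le \ell_u$ is exactly the admissibility condition that position lies in the valid range. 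So a single $\PrefSuf(\ell_u-1,\,m-\ell_v)$ query on $P$ returns, in $\cO(1)$ time and as a compact $\cO(1)$-size representation, the set of all valid splits of $P$ across this edge (the boundary cases where $P$ lies entirely in one node are already found by running the two automata). Reporting the pair $(u,v)$ whenever this set is nonempty, and carrying its compact representation along, is $\cO(1)$ per edge, giving the $\cO(|E|)$ term.

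The main obstacle is purely the bookkeeping of the index translation: one must check that the longest overlaps $\ell_u,\ell_v$ produced by the KMP automata are the right quantities to plug into $\PrefSuf$, that the query $\PrefSuf(\ell_u-1, m-\ell_v)$ is well-formed even when $\ell_u$ or $\ell_v$ is $0$ or when $\ell_u + \ell_v < m$ (in which case no occurrence spans the edge and the query returns the empty set, as it should), and that restricting to maximal overlaps loses no occurrence — which holds because any shorter suffix of the label of $u$ that is a prefix of $P$ is itself a suffix of the longest such one, and symmetrically for $v$, so it already appears inside $P[0\dd \ell_u-1]P[m-\ell_v\dd m-1]$. Once these small checks are in place, summing $\cO(m)$ preprocessing, $\cO(N)$ for the scans, and $\cO(|E|)$ for the queries yields \cref{the:bpm}.
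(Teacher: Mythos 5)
Your proposal is correct and follows essentially the same route as the paper: build the two KMP automata and the data structure of \cref{the:main} on $P$ in $\cO(m)$ time, compute the maximal prefix/suffix overlaps $\ell_u,\ell_v$ by scanning the labels in $\cO(N)$ time, and answer one $\PrefSuf(\ell_u-1,\,m-\ell_v)$ query per edge in $\cO(1)$ time. You in fact spell out the index translation and the "maximal overlaps lose nothing" argument more explicitly than the paper does.
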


\section{Concluding remarks}\label{sec:fin}
The purpose of this paper is twofold. From the results perspective, \cref{the:main} may find other applications in pattern matching, and \cref{the:bpm} may serve as a useful tool for developing practical algorithms for pattern matching in labeled graphs or as inspiration for future theoretical developments. From an educational perspective, our paper may serve as material for a class on advanced data structures showcasing a remarkably simple and optimal data structure obtained through combinatorial insight. 

\section*{Acknowledgments}
This work was supported by the PANGAIA and ALPACA projects that have received funding from the European Union’s Horizon 2020 research and innovation programme under the Marie Skłodowska-Curie grant agreements No 872539 and 956229, respectively. I am grateful to Wiktor Zuba for many helpful discussions.

\bibliographystyle{plain}
\bibliography{references}

\end{document}